\newcommand{\keywords}[1]{\par\addvspace\baselineskip
\noindent\keywordname\enspace\ignorespaces#1}
\begin{document}
\mainmatter  

\title{Solving generalized maximum-weight connected 
subgraph problem for network enrichment analysis}

\titlerunning{
    Solving generalized maximum-weight connected 
subgraph problem}

%
\author{Alexander A. Loboda\inst{1}%
\and Maxim N. Artyomov\inst{2} \and Alexey A. Sergushichev\inst{1}}
\authorrunning{
    Solving generalized maximum-weight connected 
subgraph problem}

\institute{Computer Technologies Department, ITMO University, 
    Saint Petersburg, 197101, Russia \\
\mailsa \and 
Department of Pathology and Immunology, Washington University in St. Louis, 
    MO, USA\\
\mailsb
}

%
%

\toctitle{Lecture Notes in Computer Science}
\tocauthor{Authors' Instructions}
\maketitle

\begin{abstract}
Network enrichment analysis methods allow to identify active modules
without being biased towards \emph{a priori} defined pathways.
One of mathematical formulations of such analysis is a reduction 
to a maximum-weight connected subgraph problem. 
In particular, in analysis of metabolic networks 
a generalized maximum-weight
connected subgraph (GMWCS) problem, where both nodes and edges are scored,
naturally arises. Here we present the first to our knowledge practical exact
GMWCS solver. We have tested it on real-world instances
and compared to similar solvers.
First, the results show that on node-weighted instances 
GMWCS solver has a similar performance to the best solver 
for that problem. Second, GMWCS solver is faster 
compared to the closest analogue when run on GMWCS instances with 
edge weights.
\keywords{network enrichment, maximum weight connected subgraph problem,
exact solver, mixed integer programming}
\end{abstract}

\section{Introduction}

Gene set enrichment methods are widely used for the analysis of untargeted
biological data such as transcriptomic, proteomic or metabolomic profiles.
These methods allow to identify molecular pathways, in a form of gene sets,
that have non-random group behaviour in the data. Determining such
overenriched pathways provides insights into the data and allows to
better understand the considered system.

Network enrichment methods, in opposite to gene set enrichment, do not rely on
the predefined gene sets and, thus, allow to identify novel pathways. These
methods use network of interacting entities, such as genes, proteins,
metabolites, etc. and try to identify the most regulated subnetwork.
There are different mathematical formulations of the network enrichment
problem, but many of them are NP-hard \cite{Ideker2002, Dittrich2008a, Alcaraz2014}. 

Dittrich et al. in \cite{Dittrich2008a} suggested a formulation as a
maximum-weight
connected subgraph (MWCS) problem. Originally, the authors considered
node-weighted graph, such that positive weight corresponded to "interesting"
nodes and negative weight corresponded to "non-interesting" nodes.  The goal
was to find a connected graph with the maximal sum of weights of its nodes,
which corresponded to an "active module".

Here we consider, a slightly different form of MWCS, generalized MWCS (GMWCS),
that has edges also weighted. Such formulation naturally arises in the studies of
metabolic networks \cite{Beisser2012,Sergushichev2016}, where nodes in the graph
represent metabolites and edges represent their interconversions via reactions.
There, the nodes can be scored using metabolomic profiles and the edges can be
scored using gene or protein expression profiles.

In this paper we describe an exact solver for the node-and-edge-weighted GMWCS
problem. First, in section~\ref{sec:defs} we give formal definitions. 
Then in section~\ref{sec:preprocessing} we describe preprocessing steps adapted for the
edge-based formulation. In section~\ref{sec:decomposition} we show how the
instance can be split into three smaller instances. Section~\ref{sec:mip} is
dedicated to a mixed-integer programming (MIP) formulation of the problem.  In
section~\ref{sec:experiments} we show experimental results of running
the solver on real-world instances that
appear in GAM web-service and show that it is faster and more accurate
than \emph{Heinz}~\cite{Beisser2012a} on edge-weighted instances and is similar in performance to
\emph{Heinz2}~\cite{El-Kebir2014} on node-weighted instances.

\section{Formal definitions}
\label{sec:defs}

Here we consider the Maximum-Weight Connected Subgraph (MWCS) problem for
which there are two slightly different formulations. In the most commonly used
definition of MWCS only nodes are weighted \cite{Alvarez2013, El-Kebir2014}.
In this paper we consider problem where edges are weighted too
\cite{Haouari2013}. To remove the ambiguity we call 
the former problem Simple MWCS (SMWCS) and the latter one
Generalized MWCS (GMWCS).

The goal of MWCS problems is to find in a given graph a connected subgraph with the maximal 
the maximal sum of weights. As a subgraph is connected we can consider
connected components of the graph independently. Thus, below we assume that 
the input graph is connected. 

First, we give definition of a Simple Maximum-Weight Connected Subgraph
problem. 

\begin{definition}
    Given a connected undirected graph $G = (V, E)$ and weight function
    $\omega_v: V \rightarrow \mathbb{R}$, the Simple Maximum-Weight Connected Subgraph
    (SMWCS) problem is the problem of finding a connected subgraph
    $\widetilde{G} = (\widetilde{V}, \widetilde{E})$ with the maximal total
    weight 

    $$ 
    \Omega(\widetilde{G}) = \sum_{v \in \widetilde{V}} \omega(v) 
    \rightarrow max
    $$ 
    \end{definition}

Second, we define generalized variant of this problem, where both nodes
and edges could be weighted.

\begin{definition}
    Given a connected undirected graph $G = (V, E)$ and a weight function
    $\omega: (V \cup E) \rightarrow \mathbb{R}$, the Generalized Maximum-Weight
    Connected Subgraph (GMWCS) problem is the problem of finding a connected
    subgraph $\widetilde{G} = (\widetilde{V},\widetilde{E})$ with the maximal
    total weight

    $$ 
    \Omega(\widetilde{G}) = \sum_{v \in \widetilde{V}} \omega(v) + 
    \sum_{e \in \widetilde{E}} \omega(e) \rightarrow max
    $$ 

\end{definition}

Now we define a rooted variant of the problem with one of the vertices forced
to in a solution. It is used as an auxiliary subproblem of GMWCS.

\begin{definition}
    Given a connected undirected graph $G = (V, E)$, a weight function $\omega:
    (V \cup E) \rightarrow \mathbb{R}$ and a root node $r \in V$ the Rooted
    Generalized Maximum-Weight Connected Subgraph (R-GMWCS) problem is the problem of
    finding a connected subgraph $\widetilde{G} = (\widetilde{V},\widetilde{E})$ such that
    $r \in \widetilde{V}$ and

    $$ 
    \Omega(\widetilde{G}) = \sum_{v \in \widetilde{V}} \omega(v) + 
    \sum_{e \in \widetilde{E}} \omega(e) \rightarrow max
    $$ 
\end{definition}

El-Kebir et al. in \cite{El-Kebir2014} have shown that MWCS problem is NP-hard.
Since MWCS is a special case of GMWCS then GMWCS is also NP-hard. R-GMWCS 
problem is NP-hard too because any instance of GMWCS problem can be solved
by solving an R-GMWCS instance for each node as a root.

Finally, below we use $n$ as a shorthand for the number of nodes $|V|$ in
the graph $G$.

\section{Preprocessing}
\label{sec:preprocessing}

We introduce two preprocessing rules adapted from \cite{El-Kebir2014} that
simplify the problem. These rules make a new graph with a smaller number of
vertices and edges in such a way that the GMWCS solution for the original graph can be
easily recovered from the GMWCS solution for the simplified graph.

First, we merge groups of close vertices that either none or all of them are in
the optimal solution (Fig.~\ref{fig:preprocessing}A).
Let $e = (u, v)$ be an edge with $\omega(e) \ge 0$ with
simultaneously $\omega(e) + \omega(v) \ge 0$ and $\omega(e) + \omega(u) \ge 0$.
In this case if one of the vertices is included in the solution then the
edge and the
other vertex can also be included without decreasing the total weight.
Thus, we can contract edge $e$ into a new vertex $w$ with
a weight $\omega(w) = \omega(e) + \omega(u) + \omega(v).$ After the contraction
parallel edges between $w$ and some vertex $t$ could appear. In that case we
merge all non-negative one into a single edge with weight of the sum of their
weights. After that, we remove all edges between $w$ and $t$ except one with
the maximal weight. We try to apply this rule for all vertices in the loop
while the graph is changing.

\begin{figure}[H]
    \centering
    {\includegraphics[width=12cm]{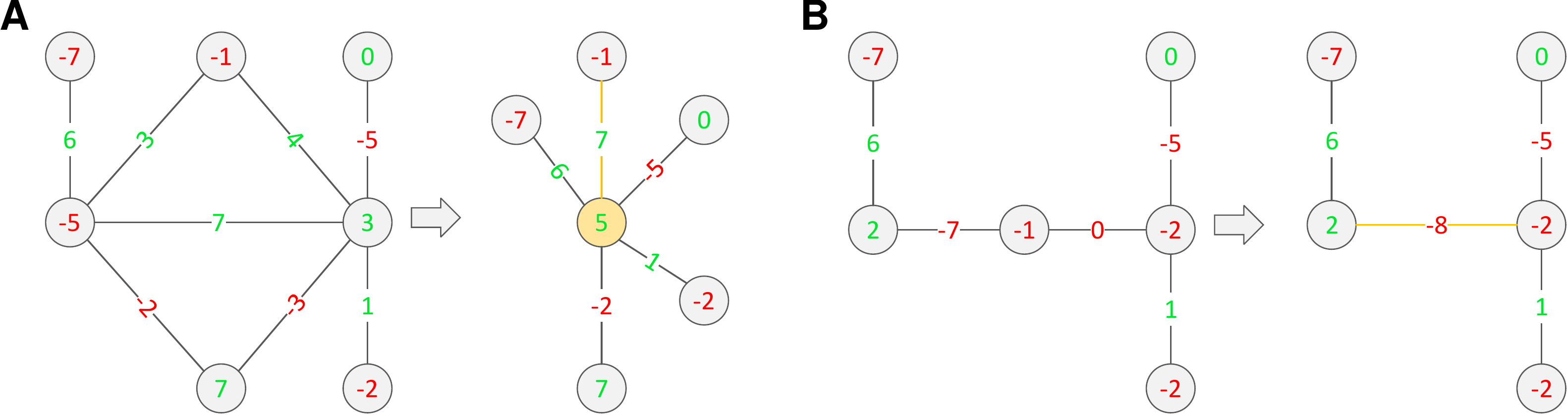} }
    \caption{Applying first rule that contract an edge (A) and second rule 
    that replace negative chain by a single edge (B). New vertices and nodes 
    painted yellow}%
    \label{fig:preprocessing}%
\end{figure}

Second, similarly to the previous step, we merge negative-weighted chains
(Fig.~\ref{fig:preprocessing}B).  Let
$v$ be a vertex with $deg(v) = 2$ with corresponding incident edges 
$e_1 = (u, v)$ and $e_2 = (v, w)$. 
If all three weights $\omega(v)$, $\omega(e_1)$
and $\omega(e_2)$ are negative, then $v$, $e_1$ and $e_2$ could be replaced
with a single edge $e = (u, w)$ with a weight $\omega(e) = \omega(v)
+ \omega(e_1) + \omega(e_2)$.  Merging negative chains is implemented in
a single pass by iteratively trying to apply the rule for all the nodes.

\section{Cut vertex decomposition}
\label{sec:decomposition}

In this section we discuss how a GMWCS instance can be decompose into three
smaller problems. The decomposition is based on the idea that 
biconnected components can be considered separately \cite{El-Kebir2014}.

\begin{figure}[H]
    \centering
    \includegraphics[width=7cm]{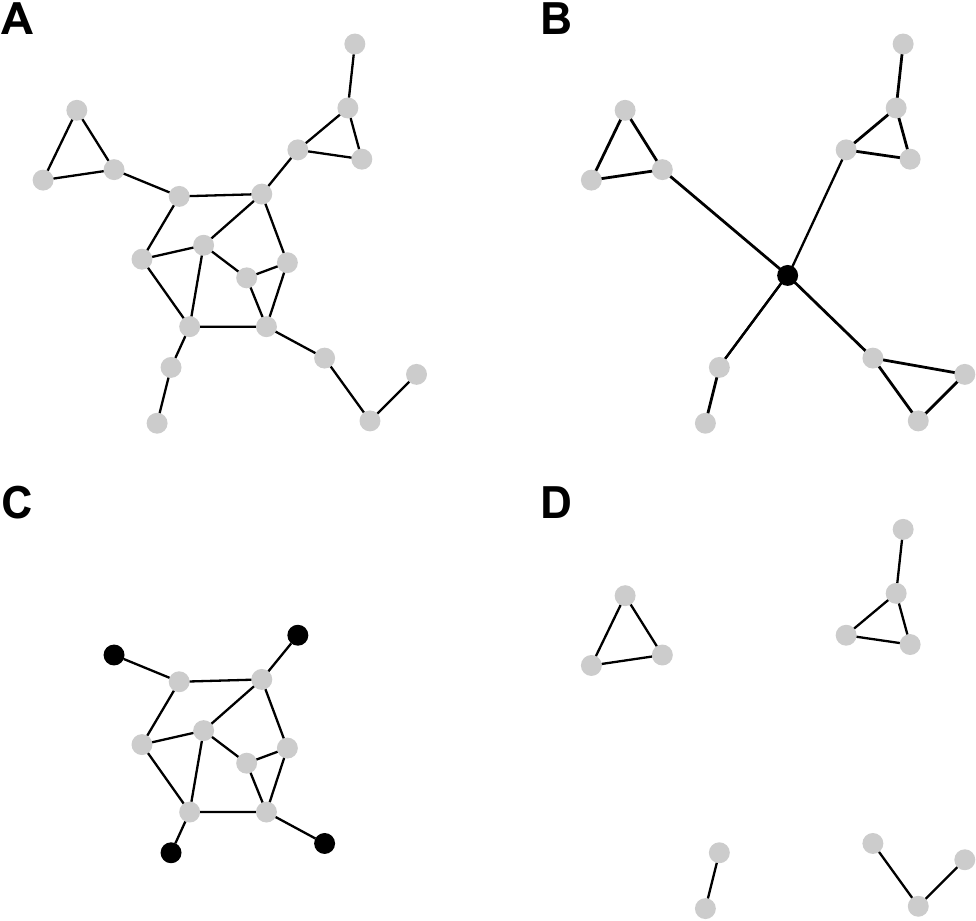}%
    \caption{Input graph and instances spawned by decomposition}
    \label{fig:decomposition}
\end{figure}

Briefly, we have  a GMWCS instance as input (Fig.~\ref{fig:decomposition}A).
First, we merge the largest biconnected component 
into a single vertex with zero weight
and solve an R-GMWCS instance for this modified graph and the new vertex 
as a root~(Fig.~\ref{fig:decomposition}B).
Then, we replace each of the components branching from the largest biconnected component
by a single vertex with weight equal to the weight of corresponding
subgraph in the R-GMWCS solution from the previous step~(Fig.~\ref{fig:decomposition}C).
Last, we try to find a subgraph with a greater weight which fully lies
in one of the branching components~(Fig.~\ref{fig:decomposition}D).

Formally, let $B$ be a biconnected component of the graph $G$ with the maximal
number of vertices. Let $C$ be a set of cut vertices of the graph $G$ that are
also contained in $B$. Let $B_c$ be a component containing $c$ in the graph
$G \setminus (B \setminus C)$.

\begin{proposition}
    Let a subgraph $\widetilde{G}$ of $G$ be an optimal solution of GMWCS for
    graph $G$ and $\widetilde{G}_c$, $\forall c \in C$, are optimal solutions for R-GMWCS instances
    for graphs $B_c$ with a root $c$.  In this case, if $\widetilde{G}$
    contains a vertex $c \in C$, then we can construct an optimal
    solution $\widetilde{G}'$ such that:
1) $\widetilde{G}' \cap B$ = $\widetilde{G} \cap B$ and 
2) $\widetilde{G}' \cap B_c = \widetilde{G}_c$.
\label{prop:cutvertex}
\end{proposition}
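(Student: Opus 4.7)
The plan is to build $\widetilde{G}'$ by excising the portion of $\widetilde{G}$ that lies inside $B_c$ (keeping only $c$ at the interface) and grafting $\widetilde{G}_c$ in its place at $c$: formally, $V(\widetilde{G}') = (V(\widetilde{G}) \setminus V(B_c)) \cup V(\widetilde{G}_c)$ with the analogous definition for edges, noting that $c$ belongs to both pieces. The verification then reduces to three checks: (a) $\widetilde{G} \cap B_c$ is itself a feasible R-GMWCS solution on $B_c$ rooted at $c$, so optimality of $\widetilde{G}_c$ gives $\Omega(\widetilde{G}_c) \ge \Omega(\widetilde{G} \cap B_c)$; (b) $\widetilde{G}'$ is connected; and (c) the total weight does not decrease under the swap.

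The main tool behind (a) and (b) is the cut-vertex property of $c$: since $c$ separates $V(B_c) \setminus \{c\}$ from $V(G) \setminus V(B_c)$, every edge of $G$ with exactly one endpoint in $V(B_c)$ has that endpoint equal to $c$. I would use this twice. First, to argue that $\widetilde{G} \cap B_c$ is connected and contains $c$: for any $v \in V(\widetilde{G}) \cap V(B_c)$, take a $v$-to-$c$ path in $\widetilde{G}$ and truncate it at its first visit to $c$; by the cut-vertex property the truncated path must stay inside $B_c$. Second, to argue the analogous statement for the outer part $\widetilde{G}_{\mathrm{out}}$, defined as $\widetilde{G}$ restricted to $V(G) \setminus (V(B_c) \setminus \{c\})$: it is connected and contains $c$, so gluing it with $\widetilde{G}_c$ at $c$ yields a connected $\widetilde{G}'$.

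For (c), partitioning the vertices and edges of $\widetilde{G}$ according to whether they lie in $B_c$ gives $\Omega(\widetilde{G}) = \Omega(\widetilde{G} \cap B_c) + \Omega(\widetilde{G}_{\mathrm{out}}) - \omega(c)$, and the same partition applied to $\widetilde{G}'$ yields $\Omega(\widetilde{G}') = \Omega(\widetilde{G}_c) + \Omega(\widetilde{G}_{\mathrm{out}}) - \omega(c)$; combined with (a), this gives $\Omega(\widetilde{G}') \ge \Omega(\widetilde{G})$, which must hold with equality by the assumed optimality of $\widetilde{G}$. Conditions (1) and (2) of the proposition then follow directly from the construction, given the implicit assumption $V(B_c) \cap V(B) = \{c\}$ that lets the graft preserve the intersection with $B$. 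The step I expect to be the most delicate is the path-truncation argument that simultaneously delivers connectivity of $\widetilde{G} \cap B_c$ and of $\widetilde{G}_{\mathrm{out}}$; after that, the weight accounting is routine.
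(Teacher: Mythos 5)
Your proof is correct and follows essentially the same route as the paper's: show via the cut-vertex/path-truncation argument that $\widetilde{G}\cap B_c$ is connected and contains $c$ (hence is a feasible R-GMWCS solution dominated by $\widetilde{G}_c$), check that the swap preserves connectivity by gluing two connected pieces at $c$, and account for the weights. If anything, your version is slightly more careful than the paper's, which glues $\widetilde{G}_c$ to $\widetilde{G}\cap B$ rather than to the full outer part $\widetilde{G}_{\mathrm{out}}$ and leaves the weight bookkeeping implicit.
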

\begin{proof}
    Let $\widetilde{B}_c = \widetilde{G} \cap B_c$. We prove that it can be
    replaced by $\widetilde{G}_c$ without loss of connectivity and optimality.
    First, $\widetilde{B}_c$ must be connected. Let it be disconnected. Then
    there is no path between $c$ and some vertex $v$. Since $\widetilde{G}$ is
    connected then there is a simple path $vc$ in $G$. However, by definition 
    of cut vertex, path $vc$ can not contain vertices from $G \setminus B_c$
    and, thus it fully lies in $B_c$, a contradiction. 
    Since $\widetilde{B}_c$ is connected and contains $c$ then it cannot
    have weight greater than $\widetilde{G}_c$ by construction of
    $\widetilde{G}_c$.

    Now we prove that the replacement keeps the graph connected.
    Repeating the reasoning from the previous step we can get that
    $\widetilde{G} \cap B$ must
    be connected. So,
    $\widetilde{G}_c$ is connected,
    $\widetilde{G} \cap B$ connected
    and both these graphs contain $c$. Thus, $\widetilde{G}'$ is also 
    connected.
    \hfill$\square$

\end{proof}

This proposition allows us to consider only optimal solutions that either
include a vertex from $B$ and in subgraphs $B_c$ are identical to the 
corresponding R-GMWCS instance or fully lie in some of the subgraphs $B_c$.

First, for each $c \in C$ we want to know the best solution of the problem for
the graph $B_c$ containing vertex $c$. It is precisely an R-GMWCS instance.
For practical reasons, it is better to spawn one instance at this step instead of $|C|$
instances.  Let $G^* = \bigcup_{\forall c \in C}B_c$. Then we merge
all vertices from $C$ contained in $G^*$ into a single vertex $r$ 
with $\omega(r) = 0$ and solve R-GMWCS problem for such graph. Let $S$ to be the solution of
this instance.  To get solution for the graph $B_c$ we replace
back $r$ to $c$ in $S$, and remove all the vertices which are not contained in
$B_c$.

Second, we find best scored subgraph of $G$ that do not lies fully in some of
$B_c$. Let $\widetilde{G}_{c}$ be the solution of R-GMWCS for graph $B_c$ with
root $c$ obtained on the previous step. We obtain a new GMWCS instance
by considering the component $B$ and for all $c \in C$ attaching 
a vertex $v$ with weight $\omega(v) = \Omega(\widetilde{G}_c)$. We solve the 
resulting instance and then recover a solution for  the original
problem.

Last, we find all potential solutions that fully lie in $B_c$ for all $c
\in C$.  For this purpose we spawn one instance for the graph $G^*
= \bigcup_{\forall c \in C} B_c$. Clearly that if the solution of the
problem for the graph $G$ lies fully in some of $B_c$ then we will find it at
this step.

\section{Mixed integer programming formulation}
\label{sec:mip}

Here we describe a MIP formulation of the problem. The GMWCS can be represented
as two parts: objective function (weight of the subgraph) that should be
maximized and constraints that ensure that the subgraph is connected. The
objective function is linear and can be put into a MIP problem in
a straightforward way. However, getting effective linear subgraph connectivity
constraints is not trivial. In this section we describe how it can be done.
The resulting MIP problem is solved by IBM ILOG CPLEX.

First, we consider a nonlinear formulation of the GMWCS problem, as proposed in
\cite{Haouari2013}. Then, we show how to eliminate nonlinearity and get
a linear system. Finally, we introduce extra symmetry-breaking and cuts,
which do not impact on the correctness of the formulation, but improve the
performance.

\subsection{Subgraph representation}
We use one binary variable for each vertex or edge that represent the presence
in the subgraph:
\begin{enumerate}

    \item Binary variable $y_v$ takes the value of 1 iff $v \in V$ belongs to the
subgraph.
    \item Binary variable $w_e$ takes the value of 1 iff $e \in E$ belongs to the
subgraph.
\end{enumerate}

For these variables to be representing a valid subgraph (not necessarily
connected) we need to introduce a set of constraints:
\begin{align}
    \label{in:edge}
    w_e &\le y_v, &\forall v \in V, e \in \delta_v.
\end{align}
These constraints state that an edge can be a part of the subgraph, 
only if both of its endpoints are a part of the subgraph.

\subsection{Nonlinear formulation}

The nonlinear formulation of the subgraph connectivity constraints is based on
the idea that any connected graph can be traversed from any of its vertices.
The output of the traversal can be represented as an arborescence where an arc
$(v,u)$ denotes that $v$ has been visited before $u$. Accordingly, we can
ensure connectivity
of a subgraph if we can provide an arborescence corresponding to the traversal of
this subgraph.

For a given graph $G = (V, E)$, let $S = (V,A)$ be a directed graph, where $A$
is obtained from $E$ by replacing each undirected edge $e=(v,u)$ 
by two directed arcs $(v,u)$ and $(u,v)$.

Now, we are going to introduce variables that we will use in the formulation
and show nonlinear system of constraints, that ensure connectivity of subgraph:
\begin{enumerate}
\item Binary variable $x_a$ takes the value of 1 iff $a \in A$ belongs to the 
arborescence.

\item Binary variable $r_v$ takes the value of 1 iff $v \in V$ is the root of
the arborescence.

\item Continuous variable $d_v$ takes the value of $n$ if the path in the
arborescence from the root to vertex $v$ contains $n$ vertices. If $v$ does not
belong to the solution then value can be arbitrary.
\end{enumerate}

Then we introduce constraints that ensure the validity of an arborescence:

\begin{align}
    \label{in:oneroot}
    \sum_{v \in V} r_v &= 1; \\
    \label{in:distlimits}
    1 \le d_v &\le n, &\forall v \in V; \\
    \label{in:onein}
    \sum_{(u, v) \in A} x_{uv} + r_v &= y_v, &\forall v \in V; \\
    \label{in:treeoversg}
    x_{vu} + x_{uv} &\le w_e, &\forall e = (v, u) \in E; \\
    \label{in:rootdist}
    d_vr_v &= r_v, &\forall v \in V; \\
    \label{in:dist}
    d_ux_{vu} &= (d_v + 1) x_{vu}, &\forall (v, u) \in A.
\end{align}

Inequality \eqref{in:oneroot} states that there is only one root in the
arborescence; \eqref{in:distlimits} is a limitation on the distance between any
vertex and the root; \eqref{in:onein} states that if a vertex is a part of the
subgraph then either it is a root of the arborescence or $deg_{in}(v)=1$;
\eqref{in:treeoversg} says that an arc of the arborescence can be in the 
solution only if the corresponding edge is also in it. 
Last two inequalities \eqref{in:rootdist} and \eqref{in:dist}
control correct distances in the arborescence.

Haouari et al. have shown in \cite{Haouari2013} that this nonlinear system is
a correct formulation of GMWCS. That is, the arborescence covers all vertices of
the resulting subgraph and the solution can induce this arborescence.

However, inequalities \eqref{in:rootdist} and \eqref{in:dist} are not linear
and should be replaced, so that the formulation can be represented as a MIP
problem. 

\subsection{Linearization}

Nonlinear equations \eqref{in:rootdist} and \eqref{in:dist} can be replaced
with the following system of linear inequalities:
\begin{align}
    \label{in:linrootdist}
    d_v + nr_v &\le n, &\forall v \in V; \\
    \label{in:lindist1}
    n + d_u - d_v &\ge (n + 1)x_{vu}, &\forall (v, u) \in A; \\
    \label{in:lindist2}
    n + d_v - d_u + &\ge (n - 1)x_{vu}, &\forall (v, u) \in A.
\end{align}

\begin{proposition}
    Every feasible solution to \eqref{in:edge}-\eqref{in:dist} is also feasible
    to \eqref{in:edge}-\eqref{in:treeoversg},
    \eqref{in:linrootdist}-\eqref{in:lindist2} and vice versa.
\end{proposition}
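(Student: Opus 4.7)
The plan is to argue equivalence constraint-by-constraint, exploiting the fact that $r_v$ and $x_{vu}$ are binary and $d_v$ is restricted to $[1,n]$ by \eqref{in:distlimits}. Since the constraints \eqref{in:edge}--\eqref{in:treeoversg} are common to both formulations, only \eqref{in:rootdist} and \eqref{in:dist} need to be compared with \eqref{in:linrootdist}--\eqref{in:lindist2}. I would therefore set up the proof as two case analyses on the binary variables, showing in each case that the nonlinear equations and the corresponding linear inequalities cut out the same set of feasible $d_v$'s.

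First I would handle \eqref{in:rootdist}. For any fixed vertex $v$, split on $r_v$. If $r_v = 0$, the nonlinear equation $d_v r_v = r_v$ is trivially satisfied, and \eqref{in:linrootdist} reduces to $d_v \le n$, which is already implied by \eqref{in:distlimits}. If $r_v = 1$, the nonlinear equation forces $d_v = 1$, while \eqref{in:linrootdist} combined with the lower bound $d_v \ge 1$ from \eqref{in:distlimits} pins $d_v$ to the same value. In either case the feasible region in $d_v$ is identical.

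Next I would handle \eqref{in:dist} by splitting on $x_{vu}$. When $x_{vu} = 1$, the nonlinear equation forces $d_u = d_v + 1$; on the linear side, \eqref{in:lindist1} becomes $d_u - d_v \ge 1$ and \eqref{in:lindist2} becomes $d_v - d_u \ge -1$, so together they also force $d_u = d_v + 1$. When $x_{vu} = 0$, both sides of the nonlinear equation vanish and the constraint is trivial; the two linear inequalities reduce to $d_u - d_v \ge -n$ and $d_v - d_u \ge -n$, and I would verify both are implied by the range $1 \le d_v, d_u \le n$ already guaranteed by \eqref{in:distlimits}. The converse direction, that any solution of the nonlinear system satisfies the linear inequalities, follows from exactly the same case analysis in reverse, reading the implications the other way.

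The main obstacle, and the place where the argument is least routine, is the $x_{vu} = 0$ case of \eqref{in:dist}: the linear constraints are big-$M$-style inequalities, and the whole point of the specific coefficients $n+1$ and $n-1$ together with the constant $n$ is that they are slack precisely when $x_{vu} = 0$ and $d_v, d_u \in [1,n]$. I would therefore be explicit there, noting that with tighter distance bounds the relaxation would still be correct but the specific choice is what makes the big-$M$ inequalities both tight when $x_{vu} = 1$ and automatically satisfied otherwise, which is what yields the equivalence rather than a one-sided implication.
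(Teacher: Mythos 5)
Your proposal is correct and follows essentially the same route as the paper's own proof: a case split on the binary variables $r_v$ and $x_{vu}$, using the bounds $1 \le d_v \le n$ from \eqref{in:distlimits} to show the linear big-$M$ inequalities are tight when the binary variable equals $1$ and vacuous when it equals $0$. Your closing remark about why the specific coefficients $n+1$ and $n-1$ make the inequalities slack exactly in the right case is a nice addition but does not change the substance of the argument.
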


\begin{proof}
    First, we prove that \eqref{in:linrootdist} is equivalent to
    \eqref{in:rootdist} in a sense of feasibility of the solution. Since $r_v$
    is a binary variable, we can consider two cases. Suppose that $r_v = 1$,
    then \eqref{in:rootdist} will take the form $d_v = 1$ while
    \eqref{in:linrootdist} will take the from $d_v \le 1$, and with
    \eqref{in:distlimits} we have $d_v = 1$. Now suppose that $r_v = 0$,
    \eqref{in:rootdist} will look $0 = 0$, it means that in this case there is
    no additional restrictions on variables and \eqref{in:linrootdist} will take
    the form $d_v \le n$, but system already have such inequality. Thus
    \eqref{in:rootdist} and \eqref{in:linrootdist} are equivalent for 
    both possible values of $r_v$.

    At the second part of the proof we will use the same approach. Here we
    prove that \eqref{in:dist} can be represented as linear inequalities
    \eqref{in:lindist1} and \eqref{in:lindist2}. 
    \begin{enumerate}
        \item Let $x_{vu} = 1$. Then after substitution into \eqref{in:dist} 
            we have  $d_u = d_v + 1$. Then we substitute $x_{vu}$ into 
            \eqref{in:lindist1} and \eqref{in:lindist2}
            \begin{align*}
                n + d_u - d_v \ge n + 1 \\
                n + d_v - d_u \ge n - 1
            \end{align*}
            or, equivalently,
            \begin{align*}
                d_u \ge d_v + 1 \\
                d_v + 1 \ge d_u
            \end{align*}
            or $d_u = d_v + 1$.

        \item Let $x_{vu} = 0$. The original nonlinear equation will take the form
            $0 = 0$. As mentioned above, it means that there is no additional
            restrictions on variables. We have to show that \eqref{in:lindist1} and
            \eqref{in:lindist2} also do not add such restrictions.
            After substitution these inequalities take the form:
            \begin{align*}
                n + d_u - d_v \ge 0 \\
                n + d_v - d_u \ge 0
            \end{align*}
            or $|d_v - d_u| \le n$. Obviously, variables that
            hold \eqref{in:distlimits} automatically hold such inequality. Thus,
    additional restrictions have not be added. \hfill $\square$ 
    \end{enumerate}

\end{proof}

\subsection{Symmetry-breaking}

It is a common practice to decrease the number of feasible solutions by
limiting the number of different but logically equivalent feasible solutions.
Such solutions are called symmetric. 
In our formulation constraints
\eqref{in:edge}-\eqref{in:treeoversg},
    \eqref{in:linrootdist}-\eqref{in:lindist2}
allow any arborescence of the graph to show
its connectivity.  So, in this section we show how to decrease the number of
feasible arborescences and thus decrease the search space.

\subsubsection{Root order rule.}
\label{subsec:rootsb}

First of all, for the unrooted GMWCS problem we force the arborescence root to
be a vertex with the maximal weight among present in the subgraph. Corresponding
constraint that is added in the MIP instance is:
\begin{align}
    \label{in:maxroot}
    \sum_{v \prec u} r_v &\le 1 - y_u, &\forall u \in V,
\end{align}
where $v \prec u$ if $\omega(v) < \omega(u)$ or if weights are equal, we use some
fixed linear order on vertices.

For the R-GMWCS we set root of the arborescence to be the same as the instance
root.

\subsubsection{Restricting traversal.}

Moreover, connected graph can be traversed from the same vertex in 
different ways.
In this section we show how to make infeasible such solutions that could not
be reached by a breadth-first search (BFS).

To achieve such form of the arborescence we add constraints:
\begin{align}
    \label{in:bfsforward}
    d_v - d_u &\le n - (n - 1)w_e, &\forall e = (v, u) \in E; \\
    \label{in:bfsback}
    d_u - d_v &\le n - (n - 1)w_e, &\forall e = (v, u) \in E.
\end{align}

These constraints state that if an edge $e$ is present the subgraph 
then the distances to endpoints differ by one.

\begin{proposition}
    For any connected subgraph $G_s$ of the graph $G$ there exists a 
    solution 
    $(\overline{r}, \overline{y}, \overline{w}, \overline{x}, \overline{d})$
    that encodes subgraph $G_s$ and is feasible to 
\eqref{in:edge}-\eqref{in:treeoversg},
    \eqref{in:linrootdist}-\eqref{in:lindist2}
    and 
    \eqref{in:maxroot}-\eqref{in:bfsback}.
\end{proposition}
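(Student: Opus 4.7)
The plan is to construct a feasible solution explicitly by running breadth-first search on $G_s = (V_s, E_s)$ from a carefully chosen root. I set $\overline{y}_v$ equal to the indicator of $v \in V_s$ and $\overline{w}_e$ equal to the indicator of $e \in E_s$; this immediately ensures \eqref{in:edge}. For the root I pick the vertex $\rho \in V_s$ that is maximal under the order $\prec$ used in \eqref{in:maxroot}, and set $\overline{r}_\rho = 1$ and $\overline{r}_v = 0$ for $v \neq \rho$. Running BFS on $G_s$ starting at $\rho$ produces an arborescence $T$; I put $\overline{x}_{vu} = 1$ precisely when $(v,u)$ is an arc of $T$ directed away from $\rho$, and $0$ otherwise. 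For $v \in V_s$ I set $\overline{d}_v$ to be the number of vertices on the BFS path from $\rho$ to $v$ (so $\overline{d}_\rho = 1$), and for $v \notin V_s$ I assign any value in $[1,n]$.

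The next step is constraint-by-constraint verification. Constraints \eqref{in:oneroot}--\eqref{in:treeoversg} are immediate from the BFS construction: the root is unique by choice; BFS distances lie in $\{1, \dots, n\}$ since $|V_s| \le n$; every vertex of $V_s$ other than $\rho$ has exactly one parent in $T$, while $\rho$ compensates via $\overline{r}_\rho = 1$ and vertices outside $V_s$ have no incoming tree arcs and $\overline{y}_v = 0$; and every directed arc of $T$ carries an undirected edge in $E_s$. For the distance constraints it is cleaner to verify the nonlinear originals \eqref{in:rootdist} and \eqref{in:dist}: at $\rho$ we have $\overline{d}_\rho \cdot \overline{r}_\rho = 1 \cdot 1 = \overline{r}_\rho$, and each tree arc $(v,u)$ satisfies $\overline{d}_u = \overline{d}_v + 1$ while non-tree arcs have $\overline{x}_{vu} = 0$. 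The previous proposition then delivers \eqref{in:linrootdist}--\eqref{in:lindist2} for free.

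For the symmetry-breaking constraint \eqref{in:maxroot}, if $y_u = 0$ the right-hand side is $1$ and the sum is bounded by $1$ via \eqref{in:oneroot}; if $y_u = 1$ then $u \in V_s$, so by the choice of $\rho$ as the $\prec$-maximum of $V_s$ we have $u \preceq \rho$, and every $v \prec u$ is also $\prec \rho$, which forces $\overline{r}_v = 0$. The core step, which I expect to be the main obstacle, is \eqref{in:bfsforward}--\eqref{in:bfsback} for edges $e = (v,u) \in E_s$: this reduces to $|\overline{d}_v - \overline{d}_u| \le 1$, i.e., that the BFS levels of the endpoints of any edge of $G_s$ differ by at most one. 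This is the standard BFS layer property: if $\overline{d}_u \ge \overline{d}_v + 2$ then at the moment $v$ is dequeued the neighbour $u$ would already have been discovered at depth at most $\overline{d}_v + 1$, a contradiction. Edges $e \notin E_s$ have $\overline{w}_e = 0$, so both BFS inequalities collapse to $|\overline{d}_v - \overline{d}_u| \le n$, which is immediate from \eqref{in:distlimits}.
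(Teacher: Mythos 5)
Your proof is correct and follows essentially the same route as the paper: choose the $\prec$-maximal vertex of $G_s$ as the arborescence root, run BFS, and read off the encoding, with \eqref{in:maxroot} following from the root choice and \eqref{in:bfsforward}--\eqref{in:bfsback} from the BFS layer property. Your version is simply more explicit, verifying each constraint (and the layer property) directly where the paper defers to the earlier propositions and to the BFS ordering.
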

\begin{proof}
    First, for any subgraph $G_s$ we can select any of its vertices,
    in particular one with the maximal weight,
    and make a BFS traversal starting from that vertex.
    As was shown above for any connected subgraph $G_s$ and any
    its arborescence there is a corresponding encoding 
    $(\overline{r}, \overline{y}, \overline{w}, \overline{x}, \overline{d})$
    that satisfy constraints 
    \eqref{in:edge}-\eqref{in:treeoversg} and 
    \eqref{in:linrootdist}-\eqref{in:lindist2}.
    By selection of the vertex with the maximal weight as an arborescence
    root constraint \eqref{in:maxroot} holds. Constraints
    \eqref{in:bfsforward}-\eqref{in:bfsback}
    also hold as they directly follow from the BFS ordering.
    \hfill$\square$
\end{proof}

\subsection{Extra cuts}

We also use additional cuts, similar to ones proposed by {\'A}lvarez-Miranda et
al. in~\cite{Alvarez2013}. However we modified them for applying in edge-based
R-GMWCS problem. Such cuts are useful for decreasing the upper bound of the 
objective of the MIP problem that solving using brunch and cut algorithm.

So, we use cut constraints of the form:
\begin{align}
    y_{v} &\le \sum_{e \in C} w_e, &\forall v \in V, C \in C^*,
\end{align}
where $C^*$ is a set of all cuts between instance root and vertex $v$. To
find violated inequalities we associate with each edge $e$ the LP relaxation value
of the variable $w_e$ and for each vertex $v$ we try to find violated
constraint by looking for the cut $C$ such that $y_v > \sum_{ e \in C}
w_e$. The minimum cut is the best candidate of constraint being violated. So, we
use the Edmonds-Karp algorithm to find such violated constraints.

\section{Experimental results}
\label{sec:experiments}

As a testing dataset we used 101 instance
generated by Shiny GAM, a web-service for integrated transcriptional and
metabolic network analysis \cite{Sergushichev2016}. In the dataset,
there are 38 instances of node-weighted SMWCS and 63 instances of GMWCS.
Archive with instances is available at 
\url{http://genome.ifmo.ru/files/papers_files/WABI2016/gmwcs/instances.tar.gz}.

For the comparison we selected two other solvers: \emph{Heinz} version 1.68
\cite{Dittrich2008a} and \emph{Heinz2} version 2.1 \cite{El-Kebir2014}.
The first one, \emph{Heinz}, was initially developed 
for node-weighted SMWCS, but later was adjusted to account for edge weights,
however, only acyclic solutions are considered. The second one, \emph{Heinz2},
does not accept edge weights, but works faster 
than \emph{Heinz} on node-weighted instances.

We ran each of the solver on each of the instances for 10 times with
a time limit of 1000 seconds. \emph{Heinz2} and our GMWCS solver were run
using 4 threads. The processor was AMD Opteron 6380 2.5GHz. 
A table with the results table are available at 
\url{http://genome.ifmo.ru/files/papers_files/WABI2016/gmwcs/results.final.tsv}. 

\subsection{Results for simple MWCS}

The experiments have shown that on the node-weighted instances
GMWCS solver has a performance similar to \emph{Heinz2}
(Fig. \ref{fig:comparisons}A).
For 24 instances (63\%) GMWCS is slower than \emph{Heinz2}.
However, 32 instances (84\%) were solved by GMWCS within
30 seconds, compared to 27 (71\%) of \emph{Heinz2}. Moreover,
4 instances were not solved by \emph{Heinz2} in the allowed time of
1000 seconds compared to only 1 instance for GMWCS.

\begin{figure}[H]
\centering
\includegraphics[width=12cm]{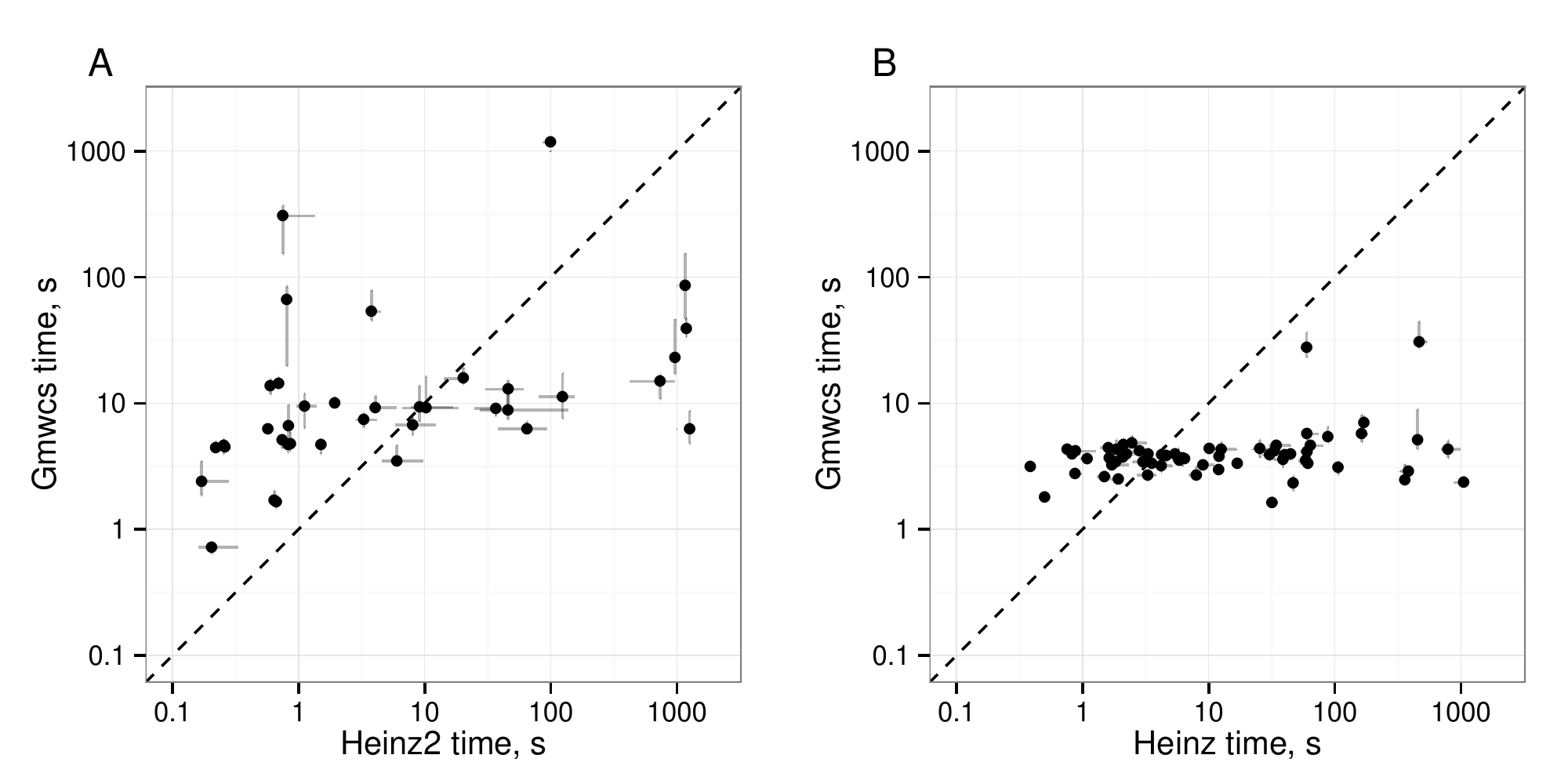}%
\caption{Comparison of GMWCS with \emph{Heinz2} and \emph{Heinz}
    solvers on
    node-weighted (A) and node-and-edge-weighted (B) instances.
    The points represent median times of 10 runs on one instance. 
    Horizontal and vertical grey lines represent the second minimal
    and the second maximal times.
    For convenience a small random noise was added
    to the median values of more then 950 seconds.
    }
\label{fig:comparisons}
\end{figure}

\subsection{Results for generalized MWCS}

For the edge-weighted GMWCS instances
GMWCS solver was able to find optimal solutions within 10 seconds 
all instances except two, while
it took for \emph{Heinz} more than 10 seconds to solve 30 of the instances 
(48\%) (Fig. \ref{fig:comparisons}B).
Moreover, only 35 instances (56\%) had an acyclic solution, accordingly,
28 instances were not solved to GMWCS-optimality by \emph{Heinz}.

\section{Conclusion}

Network analysis approaches are being actively developed for analyzing 
biological data. From the mathematical point of view this usually correspond
to NP-hard problems. Here we described an exact practical solver for 
a particular formulation of generalized maximum weight
connected subgraph problem that naturally arises in metabolic networks.
We have tested the method on the real-world data and have shown
that the developed solver is similar in performance to an existing
solver \emph{Heinz2} on a simple MWCS instances and works better and
more accurately compared to \emph{Heinz} on the edge-weighted instances.
The implementation is freely available at 
\url{https://github.com/ctlab/gmwcs-solver}.

\section*{Funding}

This work was supported by Government of Russian Federation [Grant 074-U01 to A.A.S., A.A.L.].

\bibliography{paper}{}
\bibliographystyle{splncs03}

\end{document}